\definecolor{myblue}{rgb}{0.153,0.322,0.706}
\newcommand{\cE}{\mathcal{E}}
\newcommand{\cM}{\mathcal{M}}
\newcommand{\cY}{\mathcal{Y}}
\newcommand{\cN}{\mathcal{N}}
\newcommand{\barm}{\bar m}
\newcommand{\eps}{\epsilon}
\newcommand{\cin}{\,\cdot\,}
\newcommand{\idf}{\mathbf{1}}
\newcommand{\ra}{\rightarrow}
\newcommand{\be}{\begin{equation}}
\newcommand{\ee}{\end{equation}}
\newcommand{\om}{\omega}
\newcommand{\reals}{\mathbb{R}}
\newtheoremstyle{myplain}
{5pt}			
{5pt}			
{\normalsize}	
{}			
{\bfseries}		
{:}			
{.5em}		
{\thmname{#1}\thmnumber{ #2}\thmnote{~{(#3)}}}
\theoremstyle{plain}
\newtheorem{thm}{Theorem}
\newtheorem{prop}[thm]{Proposition}
\newtheorem{hypothesis}{Hypotheses}
\theoremstyle{myplain}
\newtheorem*{remark*}{Remarks}
\begin{document}
\title{Asymptotic equivalence of probability measures and stochastic processes}

\author{Hugo Touchette}
\affiliation{National Institute for Theoretical Physics (NITheP), Stellenbosch 7600, South Africa}
\affiliation{Institute of Theoretical Physics, Department of Physics, University of Stellenbosch, Stellenbosch 7600, South Africa}

\date{\today}

\begin{abstract}
Let $P_n$ and $Q_n$ be two probability measures representing two different probabilistic models of some system (e.g., an $n$-particle equilibrium system, a set of random graphs with $n$ vertices, or a stochastic process evolving over a time $n$) and let $M_n$ be a random variable representing a ``macrostate'' or ``global observable'' of that system. We provide sufficient conditions, based on the Radon--Nikodym derivative of $P_n$ and $Q_n$, for the set of typical values of $M_n$ obtained relative to $P_n$ to be the same as the set of typical values obtained relative to $Q_n$ in the limit $n\rightarrow\infty$. This extends to general probability measures and stochastic processes the well-known thermodynamic-limit equivalence of the microcanonical and canonical ensembles, related mathematically to the asymptotic equivalence of conditional and exponentially-tilted measures. In this more general sense, two probability measures that are asymptotically equivalent predict the same typical or macroscopic properties of the system they are meant to model.
\end{abstract}

\pacs{%
02.50.-r, 
05.10.Gg, 
05.40.-a
}

\keywords{Equivalence of ensembles, large deviation theory, equilibrium systems, nonequilibrium systems}

\maketitle


\section{Introduction}

We study in this paper a notion of asymptotic equivalence of probability measures that generalizes the equivalence of the well-known microcanonical and canonical ensembles in the thermodynamic limit (see \cite{touchette2015} and references therein). The basic problem that we consider can be defined in a general way as follows. Let $M_n$ be a random variable defined with respect to two probability measures $P_n$ and $Q_n$ indexed by $n\in\mathbb{N}$. Can we establish conditions on these measures such that
\be
E_{P_n}[M_n]=E_{Q_n}[M_n],
\label{eqeq1}
\ee
where $E[\cin]$ denotes the expectation? For a fixed $n<\infty$, it is unlikely that such conditions exist beyond the obvious requirement that $P_n=Q_n$ almost everywhere. In the limit $n\ra\infty$, however, it is possible for two different measures to concentrate on the same value so as to give the same expectation. The aim of this paper is to put ``bounds'' on the differences between $P_n$ and $Q_n$ that guarantee that this concentration, which is related to the law of large numbers, holds for a large class of random variables. Physically, this means that two probabilistic models of a given system can predict the same typical or macroscopic properties of that system even if the models are different. 

The framework that we use to study this problem is the theory of large deviations \cite{ellis1985,dembo1998,hollander2000}. We assume that the random variable $M_n$ satisfies the large deviation principle (LDP) with respect to $P_n$ and $Q_n$ and define the set of concentration points of $M_n$ relative to either measure as the set of global minima and zeros of their respective rate function. In many applications, this set reduces to a single value, which then represents the typical value of $M_n$ (relative to $P_n$ or $Q_n$) on which its expectation concentrates exponentially as $n\ra\infty$ (again relative to $P_n$ or $Q_n$). In this context, the problem that we consider is: Under what conditions is the set of concentration points of $M_n$ relative to $P_n$ equal to the set of concentration points of $M_n$ relative to $Q_n$? In other words, under what conditions are the typical values of $M_n$ the same?

To answer these questions, we formulate in Sec.~\ref{secnot} some large deviation results related to the Radon--Nikodym derivative of $P_n$ relative to $Q_n$, which can be seen as a random variable with respect to either measure, and then use these results in Sec.~\ref{secequi} to prove essentially the following: If the Radon--Nikodym derivative is approximately equal to 1 almost everywhere, on the logarithmic scale defined by the LDP, then the two sets of concentration points of $M_n$ obtained relative to $P_n$ and $Q_n$ are the same (see the main Theorem~\ref{thmmain}). This condition on the Radon--Nikodym derivative defines, as explained in Sec.~\ref{secnot}, a general notion of asymptotic equivalence of measures from which we can summarize our main result as follows: \textit{If $P_n$ and $Q_n$ are asymptotically equivalent, then they are also equivalent at the level of typical values of $M_n$.}

This result is known to hold for specific conditional and exponentially-tilted measures, corresponding in statistical physics to the microcanonical and canonical ensembles, respectively \cite{touchette2015}. The contribution of this paper is to extend this asymptotic equivalence to a larger class of probability measures, defining general probabilistic models and stochastic processes, under precise large deviation hypotheses stated below. This extension has its source in recent works applying classical ensemble theory to describe the paths of nonequilibrium processes (see, e.g., \cite{lecomte2005,lecomte2007,garrahan2010,jack2010b,chetrite2013,chetrite2014,chetrite2015}) and relies on a special symmetry property, referred to as the fluctuation relation (see \cite{harris2007} for a review) that characterizes the fluctuations of physical quantities related to these processes. Another source is the study of random graphs, such as the Erd\"os--R\'enyi graph model and its variants, which become equivalent under some conditions in the infinite-volume limit \cite{park2004,janson2010,squartini2015,squartini2015b,garlaschelli2017}.

A formal result of Mori \cite{mori2016} pointed recently to this general equivalence for quantum systems, based on bounds on the relative entropy. The approach followed here was developed independently and is completely different: it is based on the general language of probability measures and their Radon--Nikodym derivative, and so covers both ``static'' and ``dynamic'' processes. This is illustrated in Sec.~\ref{secapps} with many applications related to sequences of random variables, equilibrium particle systems, random graphs, in addition to Markov processes evolving in discrete and continuous time. For this last application, our results provide conditions under which two stochastic processes, representing, for example, two different models for an information source or a nonequilibrium process, cannot be distinguished at the level of ergodic averages or stationary states. We also revisit in that section the equivalence of the microcanonical and canonical ensembles to clearly explain how our results extend the equivalence of classical ensembles in statistical physics.

\section{General framework}
\label{secnot}

\subsection{Notations}

We consider two probability measures $P_n$ and $Q_n$ on a space $\Omega_n$, with $n\in\mathbb{N}$, which define technically two sequences of probability spaces. Following the introduction, we also consider a random variable $M_n:\Omega_n\ra\cM$, called a \emph{macrostate} or \emph{observable}, which is a function of the space $\Omega_n$ to a Polish space $\cM$, that is, a complete separable metric space \cite{dembo1998}.

We give examples in Sec.~\ref{secapps} of different measures and macrostates. To fix the ideas, it is useful to picture $\Omega_n$ as the space of microscopic configurations of a system of $n$ particles and $P_n$ and $Q_n$ as two probability distributions or statistical ensembles determining the likelihood of a configuration or microstate denoted by $\om=(\om_1,\om_2,\ldots,\om_n)\in\Omega_n$, where $\om_i$ is the state of the $i$th particle taking values in some set $\Omega$ so that $\Omega_n=\Omega^n$. In this case, $M_n$ could represent the total energy of the system, for example, or its magnetization if we consider a spin model. Alternatively, $\om_i\in\Omega$ could be the state of a stochastic process at time $i$, so that $\om=(\om_1,\om_2,\ldots,\om_n)$ is a path of the process from time 1 to time $n$ and $\Omega_n=\Omega^n$ is the set of all such paths. The observable $M_n$ in that case is a functional of the paths, which often takes the form of an additive or ergodic average
\be
M_n = \frac{1}{n}\sum_{i=1}^n f(\om_i),
\ee
where $f$ is some function of $\Omega$, e.g., a real-valued function, in which case $\cM$ is simply $\reals$. The measures $P_n$ and $Q_n$ then represent two different models for the stochastic process, inducing two distributions for $M_n$.

To compare these two measures, we use the \emph{Radon--Nikodym derivative} (RND) of $P_n$ relative to $Q_n$, denoted by
\be
R_n = \frac{dP_n}{dQ_n}.
\ee
This quantity establishes, as is well known, a bridge between expectations relative to $P_n$ and $Q_n$ as follows:
\be
E_{P_n}[\cin] = E_{Q_n}[R_n\cin].
\label{eqrnd1}
\ee
In particular, 
\be
P_n(B) = E_{P_n}[\idf_B]= E_{Q_n}[R_n \idf_B],
\label{eqrnd2}
\ee
where $\idf_B$ is the indicator or characteristic function of the set $B$.

The RND, as a function $R_n(\om)$ of the elements $\om\in\Omega_n$, is a real random variable having different distributions in general relative to $P_n$ and $Q_n$. To discuss the properties of these distributions, we will make the simplifying assumption throughout this paper that $P_n$ and $Q_n$ have the same support on $\Omega_n$, so that $P_n$ is absolutely continuous with respect to $Q_n$ and $Q_n$ is absolutely continuous with respect to $P_n$. In this case, $R_n$ is finite and strictly positive almost surely on the support of $P_n$ or $Q_n$. The \emph{action} $W_n$, defined by
\be
W_n = -\frac{1}{n}\log R_n,
\ee
is then also a real and finite random variable on the support of $P_n$ or $Q_n$. Up to a constant, $W_n$ is just the log-likelihood of $P_n$ relative to $Q_n$.

The reason for introducing the action is that, in many applications of interest, the RND behaves exponentially with $n$, so that its fluctuations are more conveniently studied by transforming it, as is common in large deviation theory, to a random variable whose distribution relative to $P_n$ or $Q_n$ concentrates in the limit $n\ra\infty$. The main insight needed for proving equivalence of measures is to analyze this concentration using large deviation theory. 

\subsection{Large deviation principles}

The macrostate $M_n$ and the action $W_n$ are two random variables relative to $P_n$ or $Q_n$. The goal, following the introduction, is to compare the typical values of $M_n$ obtained under each measure by analyzing, via the distribution of $W_n$, the differences between these measures. The main hypothesis used to establish this comparison, which is the central hypothesis of this work, is that $M_n$ and $W_n$ jointly satisfies the large deviation principle, defined as follows. 

Let $\cY$ be a Polish space, $Y_n$ a sequence of random variables mapping $\Omega_n$ into $\cY$, $P_n$ a sequence of measures on $\Omega_n$, and $I$ a lower semi-continuous function that maps $\cY$ to $[0,\infty]$ with compact level sets. For any subset $A\subseteq\cY$, define
\be
I(A) = \inf_{y\in A} I(y).
\ee
We say that $Y_n$ satisfies the \emph{large deviation principle} (LDP) with respect to $P_n$ with rate function $I$ if
\be
\limsup_{n\ra\infty} \frac{1}{n}\log P_n(Y_n\in C)\leq -I(C)
\label{eqld1}
\ee
for any closed subset $C$ of $\cY$ and 
\be
\liminf_{n\ra\infty} \frac{1}{n}\log P_n(Y_n\in O)\geq -I(O)
\ee
for any open subset $O$ of $\cY$. The function $I(y)$, which is called the \emph{rate function}, is known to be unique and non-negative, $I\geq 0$ \cite{ellis1985,dembo1998,hollander2000}. Its \emph{domain} is the set of values $y\in\cY$ for which $I(y)<\infty$.

The LDP translates in technical terms the fact that the distribution of $Y_n$ decays exponentially in $n$, except on sets such that $I=0$. In many applications, the two large deviation bounds above are found to be the same for ``normal'' sets $A$, such as closed intervals or compact balls, which leads to
\be
\lim_{n\ra\infty}-\frac{1}{n}\log P_n(Y_n \in A) =I(A).
\ee
In the case where $\cY$ is a Euclidean space and $Y_n$ has a density $p_n(y)$ with respect to the Lebesgue measure, we can also write more simply
\be
\lim_{n\ra\infty} -\frac{1}{n}\log p_n(y) =I(y),
\ee
which clearly shows that the leading behaviour of the density of $Y_n$ is a decaying exponential in $n$, except where $I(y)=0$, with corrections in the exponential that are smaller than linear in $n$. In the large deviation and information theory literature \cite{cover1991,ellis1985,dembo1998,hollander2000}, this exponential scaling or approximation is often taken to define a \emph{logarithmic equivalence} expressed by
\be
p_n(y) \asymp e^{-nI(y)}
\ee
or
\be
P(Y_n\in A)\asymp e^{-nI(A)}.
\ee
In this sense, $a_n\asymp b_n$ means that $a_n$ and $b_n$ are equal up to $e^{o(n)}$ corrections in $n$ or, more precisely,
\be
\lim_{n\ra\infty} \frac{1}{n}\log \frac{a_n}{b_n} =0.
\label{eqls1}
\ee

With these definitions, we express our main hypotheses as follows. 
\begin{hypothesis}~
\label{hyp1}
\begin{itemize}
\item The couple $(M_n,W_n)$ satisfies, as a random variable on the product space $\cM\times\reals$, the LDP relative to $P_n$ with joint rate function $K_P$;
\item $(M_n,W_n)$ satisfies the LDP relative to $Q_n$ with joint rate function $K_Q$;
\item $K_P$ and $K_Q$ have the same domain.
\end{itemize}
\end{hypothesis}

These hypotheses are satisfied in many applications. The first one means essentially that
\be
p_n(M_n=m,W_n= w) \asymp e^{-n K_P(m,w)},
\ee
assuming formally that the joint probability density of $M_n$ and $W_n$ exists. A similar result holds for $Q_n$ with the rate function $K_Q$. In the absence of densities, the meaning of the LDP is as defined above with the upper and lower bounds. In all cases, our prior assumption that $P_n$ and $Q_n$ have the same support is reflected in the hypothesis that $K_P$ and $K_Q$ have the same domain.

In general, it is known that having the joint LDP for two random variables implies that each random variable also satisfies the LDP. This marginalization of the LDP can be derived from the definition of this principle or from the so-called contraction principle \cite[Thm.~4.2.1]{dembo1998}, and leads to variational formula for the marginal rate functions of $M_n$ and $W_n$.

\begin{prop}
\label{propcont}
Under Hypotheses~\ref{hyp1}, $M_n$ satisfies the LDP relative to $P_n$ with marginal rate function
\be
J_P(m)=\inf_{w\in\reals} K_P(m,w)
\label{eqcon1}
\ee
and the LDP relative to $Q_n$ with marginal rate function
\be
J_Q(m) =\inf_{w\in\reals} K_Q(m,w).
\label{eqcon2}
\ee
which has the same domain as $J_P$. Similarly, $W_n$ satisfies the LDP relative to $P_n$ and $Q_n$ with rate functions
\be
I_P(w) = \inf_{m\in\cM} K_P(m,w)
\label{eqcon3}
\ee
and
\be
I_Q(w)=\inf_{m\in\cM} K_Q(m,w),
\label{eqcon4}
\ee
respectively, having the same domain.
\end{prop}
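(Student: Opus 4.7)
The strategy is a direct application of the contraction principle (\cite[Thm.~4.2.1]{dembo1998}). The Polish product space $\cM\times\reals$ comes equipped with the two canonical projections $\pi_1:(m,w)\mapsto m$ and $\pi_2:(m,w)\mapsto w$, both of which are continuous. Since Hypotheses~\ref{hyp1} gives that $(M_n,W_n)$ satisfies the LDP on $\cM\times\reals$ relative to $P_n$ with rate function $K_P$ (which is lower semicontinuous with compact level sets by definition of the LDP), the contraction principle immediately yields that $M_n=\pi_1(M_n,W_n)$ satisfies the LDP relative to $P_n$ with rate function
\be
J_P(m)=\inf\{K_P(m',w):(m',w)\in\pi_1^{-1}(m)\}=\inf_{w\in\reals}K_P(m,w),
\ee
and that $W_n=\pi_2(M_n,W_n)$ satisfies the LDP relative to $P_n$ with rate function $I_P(w)=\inf_{m\in\cM}K_P(m,w)$. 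Applying the same argument to $Q_n$ with $K_Q$ in place of $K_P$ gives the formulas \eqref{eqcon2} and \eqref{eqcon4} for $J_Q$ and $I_Q$, together with the fact that all four functions are lower semicontinuous with compact level sets.

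It remains to verify the equality of domains. For $J_P$ and $J_Q$, the domain of $J_P$ is
\be
\mathrm{dom}\,J_P=\{m\in\cM:\exists\, w\in\reals \text{ with } K_P(m,w)<\infty\}=\pi_1(\mathrm{dom}\,K_P),
\ee
and similarly $\mathrm{dom}\,J_Q=\pi_1(\mathrm{dom}\,K_Q)$. Since $\mathrm{dom}\,K_P=\mathrm{dom}\,K_Q$ by the third bullet of Hypotheses~\ref{hyp1}, the projections coincide and $\mathrm{dom}\,J_P=\mathrm{dom}\,J_Q$. The argument for $I_P$ and $I_Q$ is identical, using $\pi_2$ instead of $\pi_1$.

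The only genuine step is the invocation of the contraction principle; the rest is a routine set-theoretic manipulation. I expect no obstacle: compact level sets of $K_P$ and $K_Q$ ensure the infima in the contraction formulas are attained, and lower semicontinuity of the marginal rate functions is inherited automatically from the contraction principle, so no additional regularity argument is needed. One small point worth flagging in the write-up is that $\reals$ with its usual topology is Polish, so $\cM\times\reals$ is Polish, meeting the hypothesis under which the LDP and contraction principle are formulated in the excerpt.
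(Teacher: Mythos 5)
Your proof is correct and follows essentially the same route as the paper: the contraction principle applied to the continuous coordinate projections on $\cM\times\reals$, with the equality of domains of the marginal rate functions read off directly from the assumed equality of the domains of $K_P$ and $K_Q$. Your explicit identification of $\mathrm{dom}\,J_P$ with the projection of $\mathrm{dom}\,K_P$ just spells out a step the paper states in one line.
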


These formul\ae\ can be justified easily in terms of densities by applying the LDP and the Laplace principle for approximating exponential integrals. Considering, for example, the marginalization of $W_n$ in
\be
p_n(M_n=m) = \int_\reals  p_n(m,w)dw\asymp\int_\reals  e^{-nK_P(m,w)}dw
\ee
leads to
\be
p_n(M_n= m) \asymp\exp\left(-n\inf_{w\in\reals} K_P(m,w)\right)=\exp\left(-n J_P(m)\right)
\ee
in the limit $n\ra\infty$. 

We give next a rigorous proof for measures based on the contraction principle of large deviation theory \cite{dembo1998}, which is itself an application of the Laplace principle \cite{hollander2000,touchette2009}. 

\begin{proof}
The contraction principle states that, if $Y_n$ satisfies the LDP with rate function $I$, then $Z_n=f(Y_n)$ satisfies the LDP with rate function
\be
J(z) = \inf_{y:f(y)=z} I(y)
\ee
if the ``contraction'' function $f$ is continuous \cite[Thm.~4.2.1]{dembo1998}.

In the case of marginalizing, for example, from $(M_n,W_n)$ to $M_n$, the contraction function is simply a projection $f(M_n,W_n)=M_n$, which is continuous under the natural product topology for the space of $(M_n, W_n)$. Therefore, 
\be
J_P(m)=\inf_{w\in\reals:f(m,w)=m}K_P(m,w)=\inf_{w\in\reals} K_P(m,w).
\ee
All other contractions follow in the same way. Moreover, the fact that the marginal rate functions have the same domain simply follows from our assumption that the joint rate functions have the same domain. 
\end{proof}

Many techniques can be used to derive the LDP for $M_n$ and $W_n$ and the corresponding rate function, though the derivation of LDPs is, as always, a difficult problem. In the case of equilibrium many-particle systems, one can use the contraction principle for observables $M_n$ that admit a representation function, as described in Sec.~5.3.4 of \cite{touchette2009}, or contractions based on the so-called \emph{level 3} of large deviations \cite{ellis1985}, which are however very difficult to work with. For Markov processes, one can also use the contraction principle when considering observables that depend on both the state of the process and its jumps or increments \cite{chetrite2014}. In this case, the contraction is applied to the \emph{level 2.5} of large deviations, which involves explicit LDPs for the empirical measure and empirical current \cite{barato2015,bertini2015,hoppenau2016}. Finally, when $M_n$ takes values in $\reals^d$ one can use the G\"artner--Ellis Theorem, which is based on the following function:
\be
\lambda_P(k,\eta) =\lim_{n\ra\infty}\frac{1}{n}\log E_{P_n} [e^{n\langle k, M_n\rangle+n\eta W_n}],
\label{eqscgfact1}
\ee
called the \emph{scaled cumulant generating function}. Here $k\in\reals^d$, $\langle\cdot,\cdot\rangle$ is the standard scalar product, and $\eta\in\reals$. Provided that this function exists in an open neighbourhood of the origin and is ``steep'' (see \cite{ellis1985,dembo1998,hollander2000} for details), this theorem states that $(M_n,W_n)$ satisfies the LDP with rate function $K_P$ given by the Legendre--Fenchel transform of $\lambda_P$:
\be
K_P(m,w) = \sup_{k\in\reals^d,\eta\in\reals}\{\langle k,m\rangle+\eta w-\lambda_P(k,\eta)\}.
\ee 
For independent and identically distributed random variables, $\lambda_P$ reduces to a simple cumulant function, while for Markov processes it is given by the dominant eigenvalue of a matrix or linear operator \cite{deuschel1989,dembo1998,touchette2017}. 

\subsection{Typical sets and values}

Since rate functions are non-negative, we have as a consequence of Prop.~\ref{propcont} that, if $(m^*,w^*)$ is a zero of $K_P$, then $m^*$ must be a zero of $J_P(m)$ and $w^*$ must be a zero of $I_P(w)$. A similar result holds relative to $Q_n$.

The zeros of rate functions will play an important role in the remaining, so it is important to discuss their interpretation. To this end, let us consider the rate function $J_P$ describing the large deviations of $M_n$ relative to $P_n$, and let $\cE_P$ denote the set of zeros of $J_P$, which also corresponds to the set of global minima of $J_P$:
\be
\cE_P=\{m\in\cM: J_P(m)=0\}.
\ee
Because $J_P$ has compact level sets, $\cE_P$ is compact and non-empty. 

In general, $\cE_P$ represents the typical set on which the distribution of $M_n$ concentrates in the limit $n\ra\infty$. To be more precise, it can be proved (see \cite[Thm.~2.5]{ellis2000}) that the sequence $P_n(M_n\in\cin)$ converges weakly to a probability measure $\Pi$ on $\cM$ such that $\Pi(\cE_P)=1$. This follows because the probability of any point that is not in $\cE_P$ decays exponentially as a result of the LDP, so that $P_n(M_n\in\cin)$ must concentrate on $\cE_P$ as $n\ra\infty$. For this reason, $\cE_P$ is called the \emph{concentration set} or the \emph{typical set} of $M_n$ relative to $P_n$. 

If $J_P$ has a unique minimum and zero $ m^*$, then the sequence $P_n(M_n\in\cin)$ converges weakly to the delta measure $\delta_{m^*}$ \cite[Thm.~2.5]{ellis2000}, so that $m^*$ is the unique \emph{concentration} or \emph{typical value} of $M_n$. In this case, $M_n$ satisfies a weak law of large numbers in the sense that 
\be
\lim_{n\ra\infty} P_n(\|M_n-m^*\|> \eps)=0,
\label{eqlln1}
\ee
where $\eps$ is any positive real number and $\|\cdot\|$ is a metric on $\cM$. We then also say that $M_n\ra m^*$ in probability (relative here to $P_n$).

These notions of typical sets and values can be applied to any of the rate functions defined before. In applications, it is more common to find that a random variable satisfying the LDP has a unique typical value than a ``extended'' typical set, so we focus here mainly on the former type of concentration. In general, a random variable has a unique typical value if its rate function is strictly convex. 

\subsection{Fluctuation relations}

The different rate functions defined up to now are not independent, since probabilities obtained with $P_n$ can be expressed, as shown in (\ref{eqrnd2}), as modified expectations with respect to $Q_n$ that involve the RND. This leads, as shown next, to a simple relation between the rate functions involving the action, referred to in statistical physics as \emph{fluctuation relations} \cite{lebowitz1999}.

\begin{prop}
\label{propfr}
The joint rate functions $K_P$ and $K_Q$ of $(M_n,W_n)$ are related by
\be
K_P(m,w) = w+K_Q(m,w)
\label{eqfr1}
\ee
for all $m\in\cM$ and $w\in\reals$. Similarly, the marginal rate functions $I_P$ and $I_Q$ of $W_n$ satisfy
\be
I_P(w) = w+I_Q(w)
\label{eqfr2}
\ee
for all $w\in\reals$.
\end{prop}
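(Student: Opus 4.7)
The plan is to reduce both identities to the change-of-measure formula combined with the fact that $R_n=e^{-nW_n}$, so that for any measurable $A\subseteq\cM\times\reals$,
\begin{equation}
P_n((M_n,W_n)\in A)=E_{Q_n}[R_n\,\idf_{(M_n,W_n)\in A}]=E_{Q_n}[e^{-nW_n}\idf_{(M_n,W_n)\in A}].
\end{equation}
The crucial feature of this identity is that $W_n$ itself is one of the coordinates of the joint random variable, so its value is essentially pinned on any event that localises $W_n$ near $w$. The rest of the argument is then to convert this into a pointwise statement about rate functions.

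First I would handle the joint identity \eqref{eqfr1}. Fix $(m,w)$ in the common domain and take a closed ball $\overline B_\delta(m,w)\subset\cM\times\reals$. On the event $\{(M_n,W_n)\in\overline B_\delta(m,w)\}$ one has $W_n\in[w-\delta,w+\delta]$, so the integrand $e^{-nW_n}$ can be sandwiched by constants and pulled outside:
\begin{equation}
e^{-n(w+\delta)}Q_n((M_n,W_n)\in\overline B_\delta(m,w))\leq P_n((M_n,W_n)\in\overline B_\delta(m,w))\leq e^{-n(w-\delta)}Q_n((M_n,W_n)\in\overline B_\delta(m,w)).
\end{equation}
Taking $-\frac{1}{n}\log$ and invoking Hypotheses~\ref{hyp1} twice (once for $P_n$ and once for $Q_n$), then sending $n\to\infty$ and afterwards $\delta\to 0$, both the upper and lower bounds converge to $w+K_Q(m,w)$, while the left-hand side converges to $K_P(m,w)$. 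This gives the desired pointwise equality $K_P(m,w)=w+K_Q(m,w)$ on the common domain; outside the common domain both sides are $+\infty$ by the third bullet of Hypotheses~\ref{hyp1}.

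The marginal identity \eqref{eqfr2} follows at once from the joint one together with Proposition~\ref{propcont}, since the additive term $w$ does not depend on $m$ and so can be pulled out of the infimum:
\begin{equation}
I_P(w)=\inf_{m\in\cM}K_P(m,w)=\inf_{m\in\cM}[w+K_Q(m,w)]=w+\inf_{m\in\cM}K_Q(m,w)=w+I_Q(w).
\end{equation}

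The main obstacle is the passage from the LDP upper/lower bounds on closed/open sets to the genuine pointwise equality at $(m,w)$; strictly speaking, the sandwich above only yields $K_P(m,w)\leq w+K_Q(m,w)$ via the upper bound on $P_n$ and the lower bound on $Q_n$ applied to the open ball, together with the reverse inequality via the opposite pairing, and one must check that the $\limsup$ and $\liminf$ arising in these two combinations collapse as $\delta\to 0$. This is standard once one notes that $K_P$ and $K_Q$ are lower semi-continuous with the same effective domain, so the neighbourhood infima $\inf_{B_\delta(m,w)}K_P$ and $\inf_{\overline B_\delta(m,w)}K_P$ both tend to $K_P(m,w)$, and likewise for $K_Q$. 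An alternative and cleaner route, available whenever the Gärtner--Ellis framework of \eqref{eqscgfact1} applies, is to observe that $\lambda_P(k,\eta)=\lambda_Q(k,\eta-1)$ as an immediate consequence of $dP_n=e^{-nW_n}\,dQ_n$, and then to take the Legendre--Fenchel transform, shifting the variable $\eta\mapsto\eta+1$ to recover $K_P(m,w)=w+K_Q(m,w)$ directly.
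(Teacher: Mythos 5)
Your proof is correct, and it takes a genuinely different route from the paper. The paper proves \eqref{eqfr1} by viewing $P_n(M_n\in A,W_n\in B)=\int_A\int_B e^{-nw}Q_n(dm,dw)$ as an exponentially tilted integral and invoking Varadhan's Lemma (in the Dupuis--Ellis form), which requires checking a superexponential tail condition because the ``tilt'' $h(m,w)=-w$ is unbounded, and then concludes via uniqueness of rate functions that the $P_n$-rate function must be $w+K_Q(m,w)$; the marginal relation \eqref{eqfr2} is obtained the same way (the contraction route you use is mentioned there only as an alternative). Your argument instead localizes the change-of-measure identity on a small ball around $(m,w)$, where $W_n$ is pinned to within $\delta$ of $w$, so the factor $e^{-nW_n}$ can be sandwiched by constants; the two pairings of the LDP upper bound (closed ball) for one measure with the lower bound (open ball) for the other give the two inequalities, and lower semi-continuity of $K_P$ and $K_Q$ makes the neighbourhood infima collapse to the pointwise values as $\delta\to 0$ --- exactly the subtlety you flag, and your resolution of it is sound (it also handles points outside the common domain, where both sides are $+\infty$). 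What your approach buys is elementariness: no Varadhan-type theorem and, in particular, no tail/moment condition on the unbounded tilt, since the localization keeps $w$ bounded; what the paper's approach buys is that it simultaneously yields the full tilted LDP statement from the cited theorem in one stroke rather than a pointwise identity assembled from the bounds. Your closing remark on the G\"artner--Ellis shortcut ($\lambda_P(k,\eta)=\lambda_Q(k,\eta-1)$, then Legendre--Fenchel transform) is a nice consistency check but, as you note, only applies when that framework (finiteness and steepness of $\lambda_P$) is available, so it cannot replace the general argument.
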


This result is obvious if we assume again that densities exist. Then the Radon--Nikodym formula (\ref{eqrnd2}) simply becomes
\be
p_n(m,w)=E_{Q_n}[e^{-nW_n}\delta(M_n-m)\delta(W_n-w)]=e^{-nw}q_n(m,w).
\ee

The proof next translates this observation for measures using another important result of large deviation theory known as Varadhan's Lemma. We refer to \cite[Thm.~1.3.4]{dupuis1997} or \cite[Thm.~4.3.1]{dembo1998} for the general formulation of this result. 

\begin{proof}
The probability measure
\be
P_n(M_n\in A,W_n\in B)= \int_A\int_B  P_n(dm,dw)
\ee 
is equivalent, using the Radon--Nikodym formula (\ref{eqrnd2}), to
\be
P_n(M_n\in A,W_n\in B) = \int_A\int_B e^{-nw} Q_n(dm,dw).
\label{eqvari1}
\ee
This has the form of an exponential integral with $X^n=(M_n,W_n)$ and $h(x)=-w$ in the notations of Theorem~1.3.4 of \cite{dupuis1997}. The function $h$ in our case is not bounded. However, since $R_n$ is strictly positive on the support of $P_n$, the large deviation upper bound
\be
\limsup_{n\ra\infty} \frac{1}{n}\log P_n(M_n\in A,W_n\leq -C)\leq -\inf_{w\leq -C}K_P(A,w)
\ee
implies
\be
\lim_{C\ra\infty }\limsup_{n\ra\infty} \frac{1}{n}\log P_n(M_n\in A,W_n\leq -C)=-\infty
\ee
for any measurable $A$. Therefore, the technical condition stated in \cite[Thm.~1.3.4]{dupuis1997} is satisfied, leading to the main result
\be
\lim_{n\ra\infty} -\frac{1}{n}\log P_n(M_n\in A,W_n\in B)=\inf_{m\in A,w\in B} \{w+K_Q(m,w)\}.
\ee
Since rate functions are unique \cite[Lem.~4.1.4]{dembo1998}, the right-hand side must be the rate function of $(M_n,W_n)$ relative to $P_n$, which proves (\ref{eqfr1}).

The same reasoning applied to $P_n(W_n\in A)$ yields (\ref{eqfr2}). Alternatively, we can derive (\ref{eqfr2}) more directly by applying the contraction principle to marginalize $M_n$ from (\ref{eqfr1}) following Prop.~\ref{propcont}.
\end{proof}

The relations (\ref{eqfr1}) and (\ref{eqfr2}) are interpreted in statistical physics as symmetries on rate functions that impose general constraints on the fluctuations of nonequilibrium processes (see \cite{harris2007} for a review). In this context, $P_n$ refers to the probability measure of a stationary Markov process modelling a nonequilibrium process, $Q_n$ is the probability measure of the time-reversed process, and $W_n$ is then called the entropy production. We will come back to this example in Sec.~\ref{secapps}.

\section{Concentration equivalence}
\label{secequi}

We are now ready to prove the equivalence of $P_n$ and $Q_n$ at the level of the typical sets of $M_n$ defined, respectively, as
\be
\cE_P = \{m\in\cM:J_P(m)=0\}
\ee
and
\be
\cE_Q = \{m\in\cM:J_Q(m)=0\}.
\ee
Since $J_P$ and $J_Q$ have compact level sets, $\cE_P$ and $\cE_Q$ are non-empty and compact.

The basic idea for proving this equivalence is contained in the fluctuation symmetry (\ref{eqfr1}), which shows that the rate function $K_P(m,w)$ and $K_Q(m,w)$ can vanish on the same value $m$ if they vanish for $w=0$. To prove that $\cE_P=\cE_Q$, we then need to make sure that $w=0$ is the only value where these rate functions vanish, so that $W_n$ has a unique typical value equal to $0$ with respect to both $P_n$ and $Q_n$. 

As a result, we assume from now on that the rate functions $I_P$ and $I_Q$ of $W_n$ each have a unique zero, which is not necessarily equal to $0$, and define the following. We say that \emph{$P_n$ and $Q_n$ are asymptotically equivalent if}
\be
\lim_{n\ra\infty} \frac{1}{n}\log \frac{dP_n}{dQ_n}=0
\label{eqdefae}
\ee
\emph{in probability with respect to $P_n$ and $Q_n$}. Note that this definition is consistent with the symmetry (\ref{eqfr2}), for if $I_Q(0)=0$ then $I_P(0)=0$, and vice versa.

The next theorem, which is the main result of this paper, shows that this notion of asymptotic equivalence of measures is sufficient for $\cE_P$ to coincide with $\cE_Q$. 

\begin{thm}
\label{thmmain}
Assume that $M_n$ and $W_n$ satisfy the joint LDP stated in the Hypotheses~\ref{hyp1} and that the rate functions $I_P$ and $I_Q$ of $W_n$ each have a unique zero. If $P_n$ and $Q_n$ are asymptotically equivalent, then $\cE_P=\cE_Q$.
\end{thm}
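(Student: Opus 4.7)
The plan is to leverage the fluctuation relation of Proposition~\ref{propfr} together with the uniqueness-of-zero hypothesis on $I_P$ and $I_Q$ to reduce the equivalence of $\cE_P$ and $\cE_Q$ to the single slice $w=0$ of the fluctuation symmetry. My first step is to locate the unique zeros of $I_P$ and $I_Q$. By Proposition~\ref{propcont}, $W_n$ satisfies the LDP under $P_n$ with rate function $I_P$, so if $I_P$ has a unique zero, the distribution of $W_n$ under $P_n$ concentrates on that zero in the sense of \eqref{eqlln1}. Asymptotic equivalence tells us that $W_n\ra 0$ in probability under $P_n$, forcing this unique zero to be $0$, so $I_P(0)=0$ and $I_P(w)>0$ for all $w\neq 0$. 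The same reasoning relative to $Q_n$ gives $I_Q(0)=0$ as the unique zero of $I_Q$.

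My second step is to establish $\cE_P\subseteq\cE_Q$. Fix $m^*\in\cE_P$, so that $J_P(m^*)=\inf_{w\in\reals}K_P(m^*,w)=0$ by the contraction formula \eqref{eqcon1}. I would first argue that this infimum is attained: any minimizing sequence $w_k$ with $K_P(m^*,w_k)\ra 0$ eventually lies in a compact level set of $K_P$, so a subsequence converges to some $w^*$, and lower semi-continuity of $K_P$ gives $K_P(m^*,w^*)=0$. Then $I_P(w^*)\leq K_P(m^*,w^*)=0$, and the uniqueness from the first step forces $w^*=0$, i.e., $K_P(m^*,0)=0$. Evaluating the fluctuation relation \eqref{eqfr1} at $w=0$ yields $K_Q(m^*,0)=K_P(m^*,0)=0$, so $J_Q(m^*)\leq K_Q(m^*,0)=0$, and nonnegativity of $J_Q$ gives $m^*\in\cE_Q$. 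The reverse inclusion $\cE_Q\subseteq\cE_P$ follows from exactly the same argument with the roles of $P$ and $Q$ exchanged, using the uniqueness of the zero of $I_Q$ together with the observation that \eqref{eqfr1} at $w=0$ reads $K_P(m,0)=K_Q(m,0)$.

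The only delicate step I anticipate is verifying that the infimum defining $J_P(m^*)$ is attained at some honest zero of $K_P$, since the whole argument depends on having such a point to insert into the fluctuation relation rather than merely an approximating sequence. Fortunately this is a routine consequence of the lower semi-continuity of $K_P$ combined with the compact-level-set property built into the definition of a rate function, so no hypothesis beyond Hypotheses~\ref{hyp1} is required. Everything else amounts to bookkeeping with the identity \eqref{eqfr1} and the nonnegativity of rate functions.
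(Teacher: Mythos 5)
Your proof is correct and follows essentially the same route as the paper: you pin the unique zeros of $I_P$ and $I_Q$ at $w=0$ using asymptotic equivalence, and then transfer zeros between $K_P(\cdot,0)$ and $K_Q(\cdot,0)$ via the fluctuation relation (\ref{eqfr1}). The only differences are organizational---you argue directly from $m^*\in\cE_P$ rather than through the auxiliary set of minimizers of $\inf_{m}K_P(m,0)$---and your explicit attainment argument (compact level sets plus lower semi-continuity) makes rigorous a step that the paper's case split over where the infimum in $J_P(\bar m)$ is achieved leaves implicit.
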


\begin{proof}
The assumption that $I_P$ and $I_Q$ have unique zeros, coupled with the assumption that $P_n$ is asymptotically equivalent to $Q_n$, implies that $I_P(0)=I_Q(0)=0$ and that $w=0$ is the only point where this equality holds.

The equality $I_P(0)=0$ leads with (\ref{eqcon3}) to
\be
0=I_P(0) = \inf_{m\in\cM} K_P(m,0).
\ee
Let $A$ denote the set of minimizers of the infimum over $m$. Then $K_P(m^*,0)=0$ where $m^*\in A$ and, from (\ref{eqcon1}), we obtain
\be
J_P(m^*) = \inf_{w\in\reals} K_P(m^*,w)=K_P(m^*,0)=0,
\label{eqp1}
\ee
which implies that $m^*\in\cE_P$.

By applying the symmetry (\ref{eqfr1}), we also have $K_Q(m^*,0)=0$ and so
\be
J_Q(m^*)=\inf_{w\in\reals} K_Q(m^*,w) = K_Q(m^*,0)=0,
\ee 
which implies that $m^*\in\cE_Q$.

This only shows that all $m^*\in A$ are in $\cE_P$ and in $\cE_Q$ or, equivalently, that $A\subset\cE_P$ and $A\subset\cE_Q$. To prove that all $m\in\cE_P$ are in fact in $A$, assume that $\bar m\in\cE_P$ and that the infimum over $w$ in (\ref{eqp1}) is achieved at $0$. Then
\be
0=J_P(\bar m) = K_P(\bar m, 0)
\ee 
so that $\bar m\in A$. On the other hand, if the infimum is achieved for $\bar w\neq 0$, then
\be
I_P(\bar w) = \inf_{m\in\cM} K_P(m,\bar w) = K_P(\bar m,\bar w)=0,
\ee
which would contradict the fact that $w=0$ is the only zero of $I_P(w)$. Consequently, we have proved that $A=\cE_P$ and so that $\cE_P\subset \cE_Q$.

To prove the equality of the two sets, we only have to use the same argument by starting with $Q_n$ to show similarly that all $m\in\cE_Q$ are also in $\cE_P$, so that $\cE_Q\subset \cE_P$. Consequently, $\cE_P=\cE_Q$.
\end{proof}

The result of Theorem~\ref{thmmain} is natural considering that the notion of asymptotic equivalence and the LDP are based on the same logarithmic scale ($\asymp$), defined in (\ref{eqls1}), so that differences between measures that are neglected on that scale should not affect the LDP of $M_n$. One has to be careful with this intuition, however, because it is known that sub-exponential differences between $P_n$ and $Q_n$ can lead to different rate functions \cite{touchette2015}. What Theorem~\ref{thmmain} shows is that such differences do not influence the \emph{concentration} of $M_n$, although they can influence the \emph{fluctuations} of $M_n$. In other words, if $P_n$ and $Q_n$ are asymptotically equivalent, then the rate functions $J_P$ and $I_Q$ for $M_n$ are not necessarily equal, but they have the same zeros. 

The next result relates the notion of asymptotic equivalence, defined in (\ref{eqdefae}) in terms of $W_n$, to the relative entropy
\be
D(P_n||Q_n)=\int dP_n \log\frac{dP_n}{dQ_n} =E_{P_n}\left[\log\frac{dP_n}{dQ_n}\right]
\ee 
or Kullback--Leibler distance \cite{cover1991}. This result is potentially useful for determining whether $P_n$ and $Q_n$ are asymptotically equivalent without having to explicitly derive the rate function of $W_n$.

\begin{thm}
\label{thmrelent}
Assume the same hypotheses as in Theorem~\ref{thmmain}. If $P_n$ and $Q_n$ are asymptotically equivalent, then 
\be
\lim_{n\ra\infty} \frac{1}{n}D(P_n||Q_n) = \lim_{n\ra\infty} \frac{1}{n}D(Q_n||P_n) = 0.
\label{eqlimre1}
\ee
Conversely, if the limits above hold, then $P_n$ and $Q_n$ are asymptotically equivalent.

\end{thm}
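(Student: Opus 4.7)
The proof rests on the identities
\be
\frac{1}{n}D(P_n\|Q_n) = -E_{P_n}[W_n],\qquad \frac{1}{n}D(Q_n\|P_n) = E_{Q_n}[W_n],
\ee
which follow directly from $W_n = -\frac{1}{n}\log R_n$ and (\ref{eqrnd1}), together with the exponential moment identities $E_{P_n}[e^{nW_n}] = E_{Q_n}[e^{-nW_n}] = 1$, consequences of $R_n$ being a Radon--Nikodym density. Markov's inequality turns these into the one-sided tail bounds $P_n(W_n \geq t) \leq e^{-nt}$ and $Q_n(W_n \leq -t) \leq e^{-nt}$ for $t > 0$, and integrating via the layer-cake formula yields the easy moment bounds $E_{P_n}[(W_n)^+] \leq 1/n$ and $E_{Q_n}[(W_n)^-] \leq 1/n$. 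The two ``opposite-sign'' moments $E_{P_n}[(W_n)^-]$ and $E_{Q_n}[(W_n)^+]$ are the quantities requiring the LDP.

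I would do the reverse direction first, as it is short. Assume $\frac{1}{n}D(P_n\|Q_n) \to 0$, i.e.\ $E_{P_n}[W_n]\to 0$. Combining with the easy positive-tail bound gives $E_{P_n}[(W_n)^-] = E_{P_n}[(W_n)^+] - E_{P_n}[W_n] \to 0$; Markov then forces $P_n(W_n\leq -\epsilon)\to 0$ for every $\epsilon>0$, while the free exponential bound handles $P_n(W_n\geq\epsilon)$, so $W_n\to 0$ in $P_n$-probability. The symmetric argument under $Q_n$, starting from $\frac{1}{n}D(Q_n\|P_n)\to 0$ and $E_{Q_n}[(W_n)^-]\leq 1/n$, gives convergence in $Q_n$-probability, hence asymptotic equivalence.

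For the forward direction, assume asymptotic equivalence. The argument of Theorem~\ref{thmmain} (combining AE with Proposition~\ref{propfr} and the unique-zero hypothesis) forces $I_P$ and $I_Q$ to vanish only at $w = 0$. It then suffices to show $E_{P_n}[W_n]\to 0$ (the $Q_n$-case being symmetric). The positive part is already $\leq 1/n$ by the easy bound, so everything reduces to proving $E_{P_n}[(W_n)^-]\to 0$. My plan is to split
\be
E_{P_n}[(W_n)^-] \leq \epsilon + \int_\epsilon^\infty P_n(W_n\leq -t)\,dt
\ee
and attack the tail integral via the LDP under $P_n$: for each fixed $t>0$, $P_n(W_n\leq -t)$ decays exponentially at rate $\inf_{w\leq -t}I_P(w)>0$, and compactness of level sets forces $I_P(-t)\to\infty$ as $t\to\infty$.

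The hard part will be upgrading this pointwise exponential decay to a tail control uniform enough in $t$ to drive the integral to zero. My plan is to combine a dominated-convergence argument on a bounded window $[\epsilon, M]$ (where the LDP gives the required $P_n$-bound, controlled by $1$) with a large-$t$ estimate supplied by the fluctuation relation $I_P(w) = w + I_Q(w)$ of Proposition~\ref{propfr}: the derived bound $I_Q(w)\geq -w$ on $w\leq 0$ furnishes the companion tail $Q_n(W_n\leq -t)\leq e^{-nt+o(n)}$, which can then be transferred to the $P_n$-side by a careful use of the exponential change of measure $dP_n = e^{-nW_n}dQ_n$ and compactness of the level sets of $I_P$. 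Once the tail integral vanishes, sending $n\to\infty$ and then $\epsilon\to 0$ concludes.
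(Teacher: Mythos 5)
Your converse direction is correct and is actually cleaner than the paper's: from $\tfrac{1}{n}D(P_n\|Q_n)=-E_{P_n}[W_n]$, $\tfrac{1}{n}D(Q_n\|P_n)=E_{Q_n}[W_n]$, the normalization identities $E_{P_n}[e^{nW_n}]=E_{Q_n}[e^{-nW_n}]=1$ and Markov's inequality you get $W_n\to 0$ in probability under both measures without ever invoking the LDP. The paper instead proves both directions in one stroke by asserting a law of large numbers in mean: the LDP for $W_n$ with unique zero $w^*$ of $I_P$ (resp.\ $I_Q$) is taken to imply $E_{P_n}[W_n]\to w^*$, after which both implications are immediate from the identity $\tfrac{1}{n}D(P_n\|Q_n)=-E_{P_n}[W_n]$. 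So your route is genuinely different (moment decomposition plus tail integrals rather than an LLN in mean), and for the converse it buys a fully elementary argument.

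The forward direction, however, has a genuine gap, and it sits exactly where you flag ``the hard part''. Everything reduces, as you say, to $E_{P_n}[W_n^-]\to 0$ with $W_n^-=\max(-W_n,0)$, i.e.\ to upgrading convergence in probability to convergence in mean, which is a uniform-integrability statement. Two problems: (i) the proposed transfer of the companion tail $Q_n(W_n\le -t)\le e^{-nt}$ to the $P_n$ side goes the wrong way, since $dP_n=e^{-nW_n}dQ_n$ gives $P_n(W_n\le -t)=E_{Q_n}[e^{-nW_n}\idf\{W_n\le -t\}]\ge e^{nt}\,Q_n(W_n\le -t)$; the change of measure inflates precisely the event you need to suppress, so the fluctuation relation $I_Q(w)\ge -w$ yields nothing beyond what Markov already gave. (ii) More fundamentally, the LDP controls $P_n(W_n\le -t)$ only for fixed $t$ and only up to $e^{o(n)}$ factors; it gives no control of the deep tail where $t$ grows with $n$, and it is exactly there that $\int_\eps^\infty P_n(W_n\le -t)\,dt$ can fail to vanish (a set of $P_n$-measure $e^{-n^2}$ on which $R_n$ is of order $e^{e^{n^{3}}}$ is invisible to the LDP and to asymptotic equivalence, yet makes $\tfrac{1}{n}D(P_n\|Q_n)$ diverge). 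Closing this requires an additional assumption of uniform-integrability type, e.g.\ a bound on $E_{Q_n}[R_n^{1+\delta}]$ of the form $e^{O(n)}$; this is the same point the paper's one-line appeal to the law of large numbers in mean passes over silently, but your sketch as written cannot be completed from Hypotheses~\ref{hyp1} and asymptotic equivalence alone.
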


\begin{proof}
The proof only relies on the law of large numbers for $W_n$. If $W_n$ satisfies the LDP relative to $P_n$ and its rate function $I_P$ has a unique zero $w^*$, as assumed, then
\be
\lim_{n\ra\infty} E_{P_n}[W_n] =w^*.
\label{eqmean1}
\ee
Therefore, if $P_n$ and $Q_n$ are asymptotically equivalent, then $w^*=0$ and
\be
\lim_{n\ra\infty} E_{P_n}[W_n]  =\lim_{n\ra\infty}-\frac{1}{n}E_{P_n}\left[\log\frac{dP_n}{dQ_n}\right]=\lim_{n\ra\infty} -\frac{1}{n} D(P_n||Q_n) = 0.
\ee
The same applies relative to $Q_n$.

To prove the converse, note that if the limit (\ref{eqmean1}) in mean applies for $W_n$ and $I_P$ has a unique zero, as assumed, then the limiting mean $w^*$ must be that zero. Hence, if the first limit for the relative entropy shown in (\ref{eqlimre1}) holds, then $I_P(0)=0$. Since the same applies for $I_Q$, we conclude that $P_n$ and $Q_n$ are asymptotically equivalent.
\end{proof}

\begin{remark*}\
\begin{enumerate}

\item The result of Theorem~\ref{thmmain} was already known to hold, as mentioned in the introduction, for the specific probability measures that are the microcanonical and canonical ensembles of statistical physics (see Sec.~\ref{secapps}). The notion of asymptotic equivalence of measures used here comes from that context.

\item The equivalence result is valid for \emph{any} random variable $M_n$ that satisfies the joint LDP with $W_n$. This means concretely that, if two probabilistic models of some system are asymptotically equivalent, then they are indistinguishable at the level of their typical sets. They are equivalent models predicting the same typical properties.

\item\label{rem3} The asymptotic equivalence of $P_n$ and $Q_n$ is a sufficient but not a necessary condition for $\cE_P=\cE_Q$. In some cases (see Sec.~\ref{secapps}), we can indeed have $\cE_P=\cE_Q$ for specific random variables $M_n$ even though $P_n$ and $Q_n$ are not asymptotically equivalent.

\item The notion of asymptotic equivalence is transitive: If $P_n$ is asymptotically equivalent to $Q_n$ and $Q_n$ is asymptotically equivalent to $F_n$, then $P_n$ is asymptotically equivalent to $F_n$. This can be checked directly from the definition of asymptotic equivalence.

\item\label{rem5} When the limits (\ref{eqlimre1}) for the relative entropy hold, $P_n$ and $Q_n$ are said to have zero divergence rate \cite{shields1993} or to be equivalent in the specific relative entropy sense \cite{lewis1994a,lewis1994,lewis1995,touchette2015}. For Markov processes, the action and relative entropy can be related to transition and waiting times \cite{chazottes2006}.

\item We need not assume for proving Theorem~\ref{thmmain} that the rate function $J_P$ and $J_Q$ defining the typical sets $\cE_P$ and $\cE_Q$ have unique zeros. This assumption is only required for $I_P$ and $I_Q$ so as to have unique typical values for $W_n$ relative to $P_n$ and $Q_n$ which, by the assumption of asymptotic equivalence,  are equal to $0$.

\item An open problem is to determine what happens when $W_n$ has another typical value other than $0$ or when $w=0$ is only in the typical set of $P_n$ or $Q_n$ without being a real concentration value. The proof given here suggests that $\cE_P$ and $\cE_Q$ should have in this case some overlap without being equal, as is known to happen for the microcanonical and canonical ensembles when they are \emph{partially equivalent} \cite{ellis2000}.

\item Another open problem is to generalize our results when $P_n$ and $Q_n$ do not have the same support. In this case, the symmetry relations expressed in Prop.~\ref{propfr} do not seem to hold on the whole domain of the rate functions involved, but only on their intersection. It is not clear then whether or not this is enough to have equivalence of typical sets, as there is no guarantee that the zeros of the rate functions relative to $P_n$ are also zeros of the rate functions relative to $Q_n$, which makes their comparison more complicated. The microcanonical and canonical ensembles, which do not have the same support, should serve as a starting point for understanding this problem.
\end{enumerate}
\end{remark*}

\section{Applications}
\label{secapps}

We illustrate in this section the result of Theorem~\ref{thmmain} using various examples of probability measures and stochastic processes. The examples are simple: they are presented to discuss certain aspects of that theorem and to give an idea of how it can be applied to measures that describe a wide range of ``static'' and ``dynamic'' probabilistic models.

\subsection{Independent random variables}

We first consider a sequence $X_1,X_2,\ldots,X_n$ of real random variables, assumed to be independent and identically distributed (iid) according to some density $p$, defining our model $P_n$, or the density $q$, defining $Q_n$. For example, we can choose $p\sim \cN(0,1)$ to be a standard normal random variables and $q\sim\cN(\mu,\sigma^2)$ to be a Gaussian random variable with mean $\mu$ and variance $\sigma^2$. In this case, the RND is simply
\be
R_n = \prod_{i=1}^n \frac{p(X_i)}{q(X_i)}=e^{-nW_n},
\ee
where
\be
W_n = \frac{\mu}{\sigma^2}M_n +\frac{(\sigma^2-1)}{2\sigma^2}C_n-\frac{\mu^2}{2\sigma^2}-\log \sigma
\ee
with
\be
M_n = \frac{1}{n}\sum_{i=1}^n X_i,\qquad C_n = \frac{1}{n}\sum_{i=1}^n X_i^2.
\label{eqobs1}
\ee

To determine whether $P_n$ and $Q_n$ are asymptotically equivalent, we need to find the rate functions $I_P$ and $I_Q$ of the action $W_n$. This can be done easily with the G\"artner--Ellis Theorem (see Sec.~\ref{secnot}) or by contraction of the joint LDP of $M_n$ and $C_n$ above. From the form of $W_n$, however, it is clear that $P_n$ and $Q_n$ are asymptotically equivalent if and only if $\mu=0$ and $\sigma=1$, that is, if and only if we trivially have $p=q$. In this case, $W_n=0$ with probability 1, so that $I_P$ and $I_Q$ are degenerate on $w=0$.

For $\mu=0$ and $\sigma\neq 1$, $M_n\ra 0$ in probability relative to both $P_n$ and $Q_n$, although the two measures are not asymptotically equivalent. For this observable, we therefore have $\cE_P=\cE_Q$, which shows that the condition of asymptotic equivalence is not a necessary condition for the equivalence of $\cE_P$ and $\cE_Q$, as noted in Remark~\ref{rem3}. Note, however, that $\cE_P\neq \cE_Q$ if we take the observable to be $C_n$, since $C_n\ra 1$ in probability relative to $P_n$ while $C_n\ra \sigma^2$ in probability relative to $Q_n$, assuming again $\mu=0$. This suggests that, if $P_n$ and $Q_n$ are not asymptotically equivalent, then there is at least one observable for which $\cE_P\neq\cE_Q$, a result that would be interesting to prove in general.

The asymptotic equivalence obtained for $p=q$ applies in a more general way to any iid sequences satisfying the hypotheses of this work. This follows from Theorem~\ref{thmrelent} by noting that 
\be
\lim_{n\ra\infty}\frac{1}{n}D(P_n||Q_n)  = D(p||q) = \int dx\, p(x)\log\frac{p(x)}{q(x)}
\ee
and that $D(p||q)$ vanishes if and only if $p(x)=q(x)$ almost everywhere \cite{cover1991}. 

To go beyond this trivial case of equivalence, we can consider sequences of random variables that are independent but not identically distributed. In particular, we can consider in $Q_n$ all but one random variable, say $X_1$, to have the same distribution $p$, so that 
\be
R_n(x_1,x_2,\ldots, x_n) = \frac{p(x_1) p(x_2)\cdots p(x_n)}{q(x_1) p(x_2)\cdots p(x_n)}= \frac{p(x_1)}{q(x_1)}
\ee
and thus
\be
W_n = \frac{1}{n}\log \frac{q(X_1)}{p(X_1)}.
\ee 
In this case, $W_n\ra 0$ as $n\ra\infty$ relative to both $P_n$ and $Q_n$, provided that $p$ and $q$ do not scale with $n$ and have the same support. Under these additional conditions, $P_n$ and $Q_n$ are then asymptotically equivalent. This can be generalized, as is clear from the form of $R_n$ above, to cases where a number $N<n$ of independent random variables have a different distribution $q$ under $Q_n$, so long as $N/n\ra 0$ as $n\ra \infty$. 

\subsection{Microcanonical and canonical ensembles}

The microcanonical and canonical ensembles are the main probabilistic models used in statistical physics to study equilibrium systems. Both are defined by transforming a basic measure $\mu_n$ on the space $\Omega_n$ of configurations or microstates of a system of $n$ particles, for which the random variable $M_n$ is interpreted as a macrostate. On the one hand, the \emph{microcanonical ensemble} is the measure on $\Omega_n$ obtained by conditioning $\mu_n$ on $M_n\in B$:
\be
\mu_n(d\om|M_n\in B) =\frac{\mu_n(d\om,M_n\in B)}{\mu_n (M_n\in B)}
=
\left\{
\begin{array}{lll}
\mu_n (d\om)/\mu_n(B) & & \text{if }M_n(\om)\in B\\
0 & & \text{otherwise,}
\end{array}
\right.
\label{eqmicro1}
\ee
where $\om$ is an element of $\Omega_n$. Usually, $M_n\in\reals$ is the energy of the system and $B$ is a very thin interval $[\barm-\eps,\barm+\eps]$, called the energy shell, located around a fixed value $\barm$. Taking this to represent our model $P_n$, we then have
\be
P_n(d\om) =\mu_n(d\om|M_n\in [\barm-\eps,\barm+\eps]).
\ee
On the other hand, the \emph{canonical ensemble} is the measure on $\Omega_n$ that transforms $\mu_n$ according to
\be
Q_n(d\om)=\frac{e^{nk M_n(\om)}}{E_{\mu_n}[e^{nk M_n}]}\mu_n(d\om),\quad k\in\reals
\label{eqcan1}
\ee
provided that $E_{\mu_n}[e^{nk M_n}]<\infty$. This measure is also called the exponential-tilting of $\mu_n$ or the exponential family, and represents physically the distribution of a system of $n$ particles with energy $M_n$ in contact with a heat bath at inverse temperature $\beta=-k$. Mathematically, it also represents a ``softening'' of the microcanonical measure in which the ``hard'' conditioning constraint $M_n=\barm$ is replaced by the ``soft'' constraint $E_{Q_n}[M_n]=\barm$ on the average of $M_n$.

To prove the equivalence of these two measures, we need to assume that $M_n$ satisfies the LDP with respect to $\mu_n$ with rate function $I$. Assuming that $I$ is convex at $\barm$ and choosing $k\in\partial I(\barm)$, where $\partial I$ denotes the sub-differential of $I$ \cite{rockafellar1970}, it can be shown that $W_n\ra 0$ in probability relative to both $P_n$ and $Q_n$ \cite{touchette2015}. The two measures or ensembles must then be equivalent at the level of typical sets of random variables that satisfy the LDP in both ensembles. The full proof of this result can be found in \cite{touchette2015}, so we do not repeat it here.

Physically, nonequivalent ensembles arise when the interactions between particles in a macroscopic system are long-range, with mean-field interactions being an extreme case of long-range interactions. For examples of such systems, see \cite{ellis2000,campa2009,touchette2015}. When the interaction is short or finite range, the microcanonical and canonical ensembles are generally equivalent.

The equivalence of the microcanonical and canonical ensembles has also been investigated recently in the context of random graphs \cite{squartini2015,squartini2015b,garlaschelli2017}, sometimes with different notions of asymptotic equivalence \cite{janson2010}. What is found in general is that a microcanonical ensemble of random graphs in which a \emph{fixed} number of constraints are considered is equivalent to a canonical ensemble of graphs in which these constraints are imposed on average with an exponential (canonical) tilting. One example is the Erd\"os--R\'enyi ensemble in which all graphs with $N$ nodes and $E$ links have the same probability, and $E$ is such that the degree per node $2E/N$ converges to a constant $d$ as $N\ra\infty$ (sparse regime). In the limit where $N\ra\infty$, this microcanonical graph ensemble is known to be equivalent with the more common canonical ensemble in which the $N$ vertices are linked at random with probability $p=d/N$. However, if an \emph{extensive} number of constraints proportional to the number of nodes are imposed, then the microcanonical and canonical ensemble can be nonequivalent. This is illustrated in \cite{squartini2015} with random graphs in which the whole degree sequence is fixed.

\subsection{Generalized canonical ensembles}

The canonical ensemble is not the only probability measure that is asymptotically equivalent to the microcanonical ensemble. More generally, we can replace the canonical measure $Q_n$ in (\ref{eqcan1}) by
\be
F_n(d\om)=\frac{e^{nh(M_n(\om))}}{E_{\mu_n}[e^{nh(M_n)}]}\mu_n(d\om),
\label{eqgencan1}
\ee
where $h:\cM\ra\reals$ is a real function of $M_n$ such that $E_{\mu_n}[e^{nh(M_n)}]<\infty$. This defines in statistical physics a \emph{generalized canonical ensemble} \cite{costeniuc2005,costeniuc2006,costeniuc2006b}, which has the same support as the canonical ensemble measure (\ref{eqcan1}) and which can be made equivalent to both the canonical ensemble and the microcanonical ensemble.

The asymptotic equivalence with the microcanonical ensemble is discussed in detail in \cite{costeniuc2005}. To see how the generalized canonical ensemble can be equivalent with the standard canonical ensemble, let us assume as before that $M_n$ satisfies the LDP relative to $\mu_n$ with rate function $I$ and define
\be
\phi(h)=\lim_{n\ra\infty}\frac{1}{n}\log E_{\mu_n}[e^{nh(M_n)}]
\ee
and
\be
\lambda(k)=\lim_{n\ra\infty}\frac{1}{n}\log E_{\mu_n}[e^{nkM_n}],
\ee
assuming that both are finite. By Varadhan's Lemma \cite[Thm.~1.3.4]{dupuis1997}, it is known that these two functions can be expressed in terms of the rate function $I$ as
\be
\phi(h)=\sup_{m\in\cM}\{h(m)-I(m)\}
\ee
and
\be
\lambda(k) = \sup_{m\in\cM}\{km -I(m)\}.
\ee
Moreover, under the hypothesis of this lemma, it can be proved (see \cite[Thm.~11.7.2]{ellis1985} or \cite[Thm.~14]{touchette2015}) that $M_n$ satisfies the LDP relative to $F_n$ with rate function
\be
I_F(m) = I(m)-h(m)+\phi(h)
\ee
and the LDP relative to $Q_n$ (the canonical ensemble) with rate function 
\be
I_Q(m) = I(m) -km+\lambda(k).
\ee

Combining these results, we see that, if $h$ and $k$ are chosen such that $I_F(m)$ and $I_Q(m)$ have the same unique minimum and zero $\barm$, then $\phi(h)=h(\barm)-I(\barm)$ and $\lambda(k)=k\barm-I(\barm)$. Consequently,
\be
\lim_{n\ra\infty} \frac{1}{n}\log\frac{dF_n}{dQ_n} = h(\barm)-k \barm+\lambda(k)-\phi(h) =0
\ee
in probability relative to both $F_n$ and $Q_n$, which means that we have asymptotic equivalence. This follows here because the RND is a function of $M_n$ only and both $F_n$ and $Q_n$ concentrate on the same value $\barm$ of $M_n$. 

Specific examples of generalized ensembles related to long-range and mean-field interacting systems are discussed in \cite{costeniuc2005,costeniuc2006,costeniuc2006b,touchette2010b}. The advantage of using the generalized canonical ensemble is that it can be used to describe the microcanonical properties of many-body systems whenever the canonical ensemble itself is not equivalent with the microcanonical ensemble. This happens generically when the entropy is nonconcave as a function of the energy in the thermodynamic limit. For more details on equivalent versus nonequivalent ensembles, we refer to \cite{touchette2004b,touchette2015}. 

\subsection{Markov processes}

We close the list of examples by briefly discussing Markov processes, beginning with the case of Markov chains. 

Let $X_1,X_2,\ldots, X_n$ be an ergodic Markov chain on a set $\Omega$, assumed to be finite for simplicity, and consider two probability measures $P_n$ and $Q_n$ on the space $\Omega_n=\Omega^n$ defined by the (homogeneous) transition kernels $p(x,y)$ and $q(x,y)$, respectively. Starting with the same distribution $\rho$ for $X_1$, we thus write
\be
P_n(x_1,x_2,\ldots, x_n) = \rho(x_1) p(x_1,x_2) \cdots p(x_{n-1},x_n)
\ee
and
\be
Q_n(x_1,x_2,\ldots, x_n) = \rho(x_1) q(x_1,x_2)\cdots q(x_{n-1},x_n),
\ee
so that
\be
W_n = \frac{1}{n}\sum_{i=1}^{n-1} \log \frac{q(x_i,x_{i+1})}{p(x_i,x_{i+1})}.
\ee
The rate function of $W_n$, if it exists, can be derived by contracting the LDP of the so-called pair empirical distribution of the Markov chain; see Sec.~4.3 of \cite{touchette2009}. Alternatively, we can notice that the relative entropy rate of the two Markov chains is
\be
\lim_{n\ra\infty}\frac{1}{n}D(P_n||Q_n) = \sum_{(x,y)\in \Omega^2} \mu(x) p(x,y) \log\frac{p(x,y)}{q(x,y)}
\ee
where $\mu(x)$ is the invariant distribution of the Markov chain with transition distribution $p(x,y)$ \cite{cover1991}. Since the relative entropy on the right-hand side above vanishes if and only if $p=q$ almost everywhere, we then obtain, similarly to iid sequences, that $P_n$ and $Q_n$ are asymptotically equivalent, under the conditions of Theorem~\ref{thmmain}, if they define the same Markov chain with the same transition kernel. This applies to homogeneous Markov chains. As in the case of iid sequences, there is more room for equivalence if we allow the transition kernels to be time-dependent or compare Markovian with non-Markovian processes. 

Similar results can be formulated for Markov chains on uncountable and continuous spaces, provided that they have the LDPs required in Hypotheses~\ref{hyp1}. One can also consider continuous-time processes, such as pure diffusions, by replacing $\Omega_n$ with the space $\Omega_T$ of sample paths over the time interval $[0,T]$, in which case $P_n$ and $Q_n$ are ``path'' measures similar to the Wiener measure, denoted by $P_T$ and $Q_T$, whose action
\be
W_T = -\frac{1}{T}\log\frac{dP_T}{dQ_T}
\ee
can be expressed in terms of stochastic integrals using Girsanov's Theorem \cite{pavliotis2014}. The large deviation limit defining the equivalence of $P_T$ and $Q_T$ is then the long-time or ergodic limit $T\ra\infty$. 

Many examples of stochastic processes related to nonequilibrium systems which are asymptotically equivalent are treated in \cite{chetrite2014}. This work introduced together with \cite{chetrite2013} the notion of asymptotic equivalence of processes in order to construct ``modified'' Markov processes that are equivalent, in terms of typical properties, to Markov processes conditioned on reaching certain large deviations. What is found in general is that the conditioned Markov processes are not Markovian, but do become asymptotically equivalent in the long-time limit to a homogeneous Markov process, given by a generalization of the Doob transform. For more information on this large deviation conditioning problem, and its connections with nonequilibrium versions of the microcanonical and canonical ensembles, we refer to \cite{chetrite2013,chetrite2014,chetrite2015}.

To close this section, let us consider as an example an ergodic diffusion $X_t$ with path measure $P_T$, and let $Q_T$ be the path measure of the same process reversed in time (in the sense of Haussmann and Pardoux \cite{haussmann1986}). If the process is reversible, that is, if it satisfies the detailed balance condition, then it is known that the action $W_T$, which corresponds to the entropy production \cite{lebowitz1999}, depends only on the initial and final states:
\be
W_T = -\frac{1}{T}\log \frac{p(X_0)}{p(X_T)},
\ee
where $p$ is stationary density of $X_t$. Since this density does not scale with time, $X_t$ and its time-reversal must therefore be asymptotically equivalent,  and so equivalent at the level of typical values. This is expected physically, since the two processes are then statistically indistinguishable. On the other hand, if $X_t$ is irreversible, then the entropy production is known to be strictly positive, which means that the process and its time-reversal are not asymptotically equivalent. In this case, the two processes behave differently in terms of path statistics and typical ergodic values.

\begin{acknowledgements}
I am grateful to Frank den Hollander for comments on a first version of this paper. This work was supported by the National Research Foundation of South Africa (Grants 90322 and 96199).
\end{acknowledgements}

\bibliography{masterbib}
\end{document}